\let\@fnsymbol\@arabic
\theoremstyle{definition}
\newtheorem{definition}{Definition}
\newtheorem{theorem}{Theorem}
\newtheorem{proposition}{Proposition}
\newtheorem{remark}{Remark}
\newcolumntype{L}{>{\centering\arraybackslash}m{1.5cm}}
\begin{document}
\title{\bf Bayesian Neural Tree Models for Nonparametric Regression}
\author[a]{Tanujit Chakraborty\thanks{Corresponding author. Email: tanujit\textunderscore r@isical.ac.in}}
\author[b]{Gauri Kamat}
\author[a]{Ashis Kumar Chakraborty}
\affil[a]{Statistical Quality Control \& Operations Research Unit, Indian Statistical Institute, India.}
\affil[b]{Department of Biostatistics, Brown University, Providence, Rhode Island, USA.}
\date{}
\maketitle
\begin{abstract}
Frequentist and Bayesian methods differ in many aspects but share some basic optimal properties. In real-life prediction problems, situations exist in which a model based on one of the above paradigms is preferable depending on some subjective criteria. Nonparametric classification and regression techniques, such as decision trees and neural networks, have both frequentist (classification and regression trees (CART) and artificial neural networks) as well as Bayesian (Bayesian CART and Bayesian neural networks) approach to learning from data. In this paper, we present two hybrid models combining the Bayesian and frequentist versions of CART and neural networks, which we call the Bayesian neural tree (BNT) models. BNT model can simultaneously perform feature selection and prediction, are highly flexible, and generalize well in settings with limited training observations. We study the statistical consistency of the proposed approach and derive the optimal value of a vital model parameter. We provide some illustrative examples using a wide variety of standard regression data sets to show the superiority of the proposed models.
\end{abstract}
\noindent{\it Keywords: }Nonparametric regression; Hybrid model; Consistency; Bayesian neural tree.

\section{Introduction}
Methodologies in nonparametric regression employ either a frequentist or a Bayesian approach to learning from data. The choice between the two paradigms is often philosophical and based on subjective judgments. Two models, namely decision trees and neural networks, have primarily been used in the frequentist setting, but have robust Bayesian counterparts. Classification and regression trees (CART) were introduced by \citet{breimanetal1984} for flexibly modeling the conditional distribution of an outcome variable given the predictors.
For a data set, a tree is grown by sequentially splitting its internal nodes, and then pruning the grown tree back to avoid overfitting \citep{loh2011}. The splitting rule for each node is based on the minimization of the mean squared error (MSE) in regression and Gini index in classification. The Bayesian approach to finding a `good' tree model entails specification of a prior distribution and stochastic search \citep{chipman1998,chipman2002bayesian}. The fundamental idea behind Bayesian CART (BCART) is to have the prior induce a posterior distribution that can guide a (posterior) stochastic search towards a promising tree model \citep{chipman2002bayesian}.
\par 
On the other hand, an artificial neural network (ANN) is an interconnected gathering of artificial neurons organized in layers \citep{horniketal1989}. A standard ANN model has three layers of nodes, namely input, hidden, and output layers, where nodes are neurons that use a nonlinear activation function (except for the input nodes). A backpropagation gradient descent algorithm is used to compare the network outputs with the actual outputs \citep{rumelhardtetal1988}. If an error exists, it is backpropagated through the network, and the weights in the network architecture are adjusted accordingly \citep{lecun2015deep}. An ANN, however, is often prone to overfitting when the data comprise a limited number of observations. A Bayesian treatment to an ANN offers a practical solution to this problem by naturally allowing for regularization \citep{mackay1992,neal2012}. A Bayesian neural network (BNN) can also deal with model complexity, e.g., by selecting the number of hidden neurons in the model. In particular, a BNN treats the network weights to be random and obtains a posterior distribution over them \citep{barberetal1998, kendall2017uncertainties}. \par
Although CART, BCART, ANN, and BNN individually perform well, they exhibit certain drawbacks. Tree-based models may overfit the training data, or stick to local minima in the decision boundaries. Additionally, the training of neural networks suffers considerably in a limited-data setup. Thus, a hybrid (or ensemble) formulation of trees and neural networks can leverage their strengths and overcome their limitations. Several such hybrid models blending CART and ANNs have been discussed in the literature \citep{utgoff1989,sethi1990, sirat1990neural, kijsirikuletal2001, michelonietal2012, vanli2019nonlinear, chakraborty2019radial, chakraborty2019, chakraborty2019novel, chakraborty2018novel, chakraborty2020superensemble}, and have been useful for improving the prediction accuracy of the individual models. These hybrid models, however, only consider frequentist implementations of their components. Some other works have explored hybrid frequentist-Bayesian models in the context of parametric inference, hypothesis testing, and other inferential problems \citep{yuan2009bayesian, bayarrietal2004, bickel2015blending}. However, we are not aware of any hybrid algorithms blending frequentist and Bayesian methods for nonparametric regression. Motivated by this, we propose a hybrid approach, called the Bayesian neural tree (BNT) model, for feature-selection-cum-prediction purposes. 
BNT model utilizes the built-in feature selection mechanisms of tree-based models (CART and BCART), along with the accuracy and flexibility of neural net (ANN and BNN), particularly in limited-data-size settings. The proposal can overcome the deficiencies of the component models, 
have fewer tuning parameters, and are easily interpretable. On the theoretical side, we prove the models' statistical consistency, which gives a theoretical guarantee of their robustness. Finally, we explore the performance of the BNT models using several standard regression data sets. \par
The remainder of this article is organized as follows. Section \ref{sec2} discusses the proposed BNT model. Section \ref{sec3} explores the statistical properties of the BNT model. The empirical performance of the models using real-life data sets is addressed in Section \ref{sec4}. Section \ref{sec5} concludes the paper with a discussion on the future scope of this work.

\section{Formulation of the BNT models \label{sec2}}
We begin by establishing notation. We assume that models are trained on $n$ observations, and that there are $d$ predictor variables. For data point $i$, where $1 \leq i \leq n$, let $Y_i$ denote the response variable, $\overline Y_i$ denote its mean value, and $\hat Y_i$ denote the final prediction obtained from a model. Let $X_i = (X_{i1},\dots,X_{id})^{'}$ denote the input vector for the $i^{th}$ data point, where $1 \leq i \leq n$.  We denote the training data as $L_n = \{Y_i,X_i\}_{i=1}^{n}$. In what follows, we omit the subscript $i$ for simplicity of notation.

\subsection{Overview of constituent models}
\subsubsection{CART and BCART}
A CART model consecutively divides the predictor space into multiple regions. The partitioning begins at the root node, followed by splits at each internal node. A splitting rule (i.e., a chosen predictor and a split threshold) for a node is determined based on the minimization of the mean squared error (MSE) in regression settings. For each node, a stopping criterion called `minsplit' is defined in terms of the minimum number of observations required in the node for further splitting. A node with less than `minsplit' samples are labeled as a terminal node. At a terminal node, the predictor space is not split any further. Every data point falls into a region defined at one of the terminal nodes, and predictions are made using the parameter local to that region. A fully grown tree is often pruned back via cross-validation or cost-complexity pruning to avoid overfitting. \par
To illustrate the Bayesian version of CART, we assume that a tree $T$ has $b$ terminal nodes. Let the set of terminal node parameters be $\Lambda = \{\lambda_1,\dots,\lambda_b\}$. A prior is then placed on $(\Lambda,T)$ as 
\begin{align}
P(\Lambda,T) = P(\Lambda|T) \text{ } P(T),
\end{align}
where $P(T)$ is specified as a tree generating stochastic process comprising two functions, namely $P_{split}(m,T)$, the probability that a terminal node $m$ in a tree $T$ is split, and $P_{rule}(\gamma|m,T)$, the probability that a splitting rule $\gamma$ is assigned if $m$ is split \citep{chipman1998}. A general form of $P_{split}(m,T)$ is \citep{chipman1998}
\begin{align}
P_{split}(m,T) = \alpha (1+D_m)^{-\beta}, \label{eqsplit}
\end{align}
where $D_m$ denotes the number of splits before the $m^{th}$ node, and $0<\alpha<1$ and $\beta\geq0$. Larger values of $\beta$ make the splitting of deeper nodes less probable, since the RHS in \eqref{eqsplit} is a decreasing function of the depth $D_m$ of a node. The prior $P_{rule}(\gamma|m,T)$ is specified so that at an internal node, each available predictor is equally likely to be chosen for a split, and for a chosen predictor, each of its observed values is equally likely to be chosen as a splitting threshold.
$P(\Lambda|T)$ is generally specified so that the marginalization 
\begin{align}
P(Y|T,X) = \int P(Y|X, \Lambda, T) \text{ } P(\Lambda|T) \text{ } d\Lambda
\end{align}
is feasible \citep{chipman1998}. For a continuous $Y$, we model the values in the $m^{th}$ terminal node as a Gaussian with mean $\mu_m$ and variance $\sigma_m^2$, where $1 \leq m \leq b$ . Thus, we have $\Lambda = \big\{\mu_m, \sigma^2_m\big\}_{m=1}^{b}$, with $\mu_m$ and $\sigma^2_m$ having conjugate Gaussian and Inverse-Gamma priors respectively, as in \citet{chipman1998,chipman2002bayesian}.
The posterior over the possible tree models $P(T|Y,X)$ is analytically explored via a Metropolis-Hastings search algorithm. A `good' tree is usually found as a tradeoff between the number of terminal nodes $b$, and a high value of the marginal probability $P(Y|T,X)$.

\subsubsection{ANN and BNN} \label{bnnexplain}
An ANN is a nonparametric model consisting of an input layer, a certain number of hidden layers, and an output layer. All inputs to the network pass through the hidden layers, after which they are mapped to the final output. Each interconnection of neurons in an ANN is associated with a weight. In frequentist settings, such weights are obtained by minimizing an error function and its gradient. \par
We consider an ANN with parameter vector $\theta$ and $\sigma$ denotes the variance function, which contains the network weights and a general offset (or bias) parameter. In the Bayesian setting, a zero-mean multivariate Gaussian prior is placed on $\theta$ \citep{mackay1992v2} as 
\begin{align}
    P(\theta) = \mathlarger{\frac{1}{\big(\frac{2\pi}{\sigma_p}\big)^{\frac{\mathlarger l}{2}}} \text{  exp} \big(-\frac{\sigma_p}{2} ||\theta||^2\big)},
\end{align}
where $l$ is the length of $\theta$. The likelihood is modeled as a Gaussian given by
\begin{align}
    P(L_n|\theta) = \mathlarger{\frac{1}{\big(\frac{2\pi}{\sigma_l}\big)^{\frac{\mathlarger n}{2}}} \text{  exp} \big(-\frac{\sigma_l}{2} \displaystyle \sum_{i=1}^{n} \big( \hat Y_i-Y_i^2\big)\big)}.
\end{align}
Predictions are obtained from the posterior predictive distribution
\begin{align}
P(Y|X, L_n) = \int_{\theta} P(Y|X, \theta) \text{ } P(\theta|L_n) \text{ } d\theta. \label{eqpost}
\end{align}
The integral in \eqref{eqpost} is approximated by $P(Y|X, \tilde\theta)$, where $ \tilde \theta$ is obtained by locally minimizing 
\begin{align}
    E = \mathlarger{\frac{\sigma_l}{2} \displaystyle \sum_{i=1}^{n} \big( \hat Y_i-Y_i^2\big) + \frac{\sigma_p}{2} ||\theta||^2}. \label{eqparts}
\end{align}
The first term in the RHS of \eqref{eqparts} corresponds to the error function that is minimized in frequentist settings. The second term corresponds to a regularization term that penalizes larger values in $ \theta $ and, hence, restrains overfitting. A BNN can also have a variable architecture, i.e., the number of hidden nodes can be subject to a Geometric distribution, enabling one to place a lower probability on more extensive networks, see \citet{insua1998feedforward}).

\subsection{Proposed BNT model}
We now describe the working principles of the proposed BNT model. We present two variants of BNT models where each consists of a Bayesian (frequentist) implementation of a tree-based component for feature selection purposes, and a frequentist (Bayesian) implementation of a neural network component for prediction purposes (see Figure \ref{figmodRWNGT}). Such hybridization of blending trees and neural networks in entirely frequentist settings were first proposed and theoretically justified in \citet{chakraborty2019radial,chakraborty2019,chakraborty2019novel}. In this work, we extend those approaches but consider frequentist and Bayesian versions of the component models. In theory, both BNT models are asymptotically consistent, as we prove in Section \ref{sec3}.
\subsubsection {BNT-1 model}
The BNT-1 model comprises two stages. In the first stage, a classical CART model is fit to the data, taking all $d$ predictors. The CART model implicitly selects a feature at each internal split (based on maximum reduction in the MSE). Thus, the features used to construct the CART model can be considered as `important' features in the data. We record these features, as well as the predictions obtained from the CART model. In the second stage, we construct a BNN with one hidden layer, where the input variables are the selected features from CART plus the prediction results from phase one. We use a Gaussian prior for the network weights and also model the data likelihood to be Gaussian. The prior for the number of hidden neurons ($k$) is taken to be a Geometric distribution with probability of success $p$. As illustrated in Section \ref{bnnexplain}, the BNN is naturally regularized through its implementation, hence making overfitting less likely. The final set of predictions is obtained after fitting the BNN model to the data. \par
Thus, the proposed BNT-1 model utilizes the intrinsic feature selection ability of CART in the first stage. It also trains a BNN model in the second stage using the selected features and predicted values from CART. This improves the accuracy of the individual models, as utilizing the CART output as a feature in the BNN adds non-redundant information. We present a formal workflow of the BNT-1 model below.
\begin{algorithm}[h!]
\KwIn{$L_n = \{Y;X_1,\dots,X_d\}$}
\KwOut{$\hat Y$} 
\nl  Fit a CART model to $L_n$ with a specified `minsplit' value.  
 \begin{itemize}
  \item Record $S \subseteq \{X_1,\dots,X_d\}$, the set of selected features from CART.
  \item Record $\hat Y_{cart}$, the predictions from CART.
  \item Construct $S^{'} = \{S,\hat Y_{cart}\}$, the complete set of features for the BNN model.
  \end{itemize}
\nl Fit a BNN model with $k$ hidden neurons, where $k\sim \text{Geometric }(p)$, and with input feature set $S^{'}$.
 \begin{itemize}
  \item Record $\hat Y$, the final set of predictions from the BNN.
  \end{itemize}
  \caption{{\bf BNT-1}}
\end{algorithm}

\begin{figure}[t]
\centering
\includegraphics[scale=0.75]{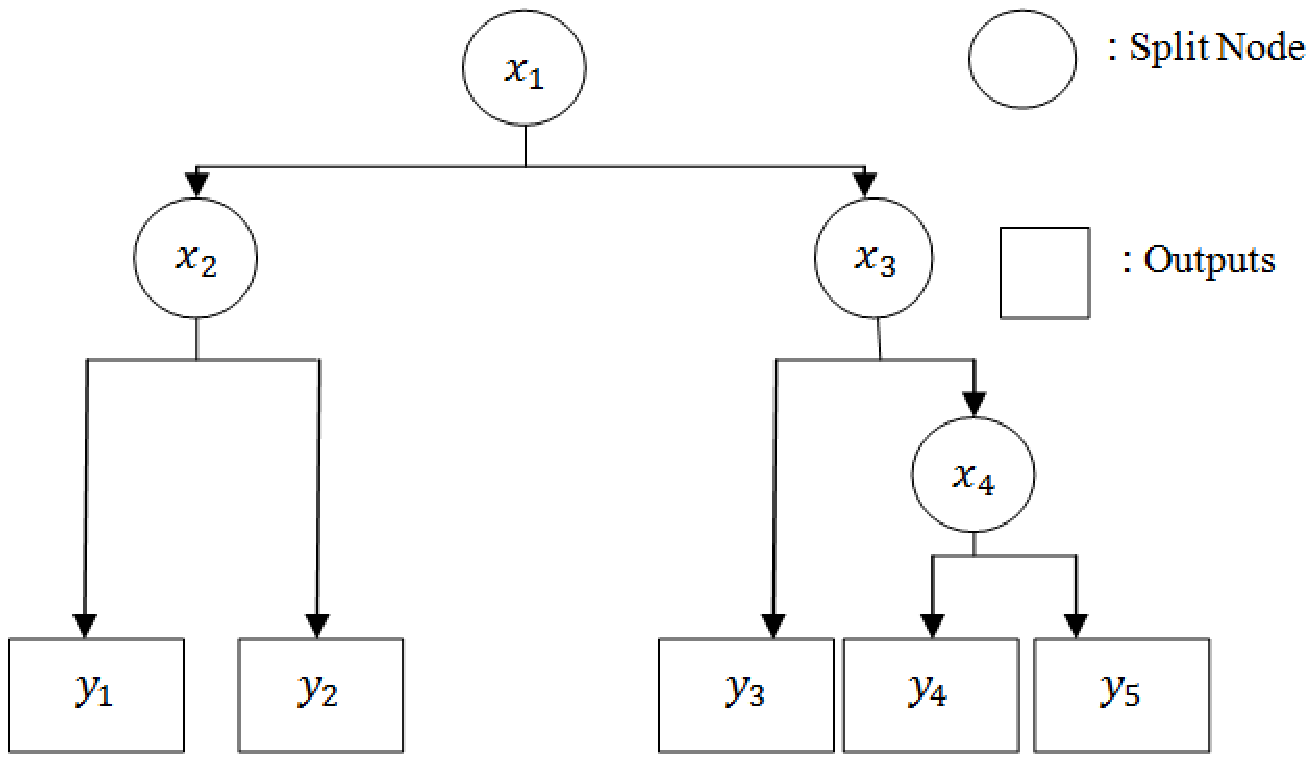}
\includegraphics[scale=0.75]{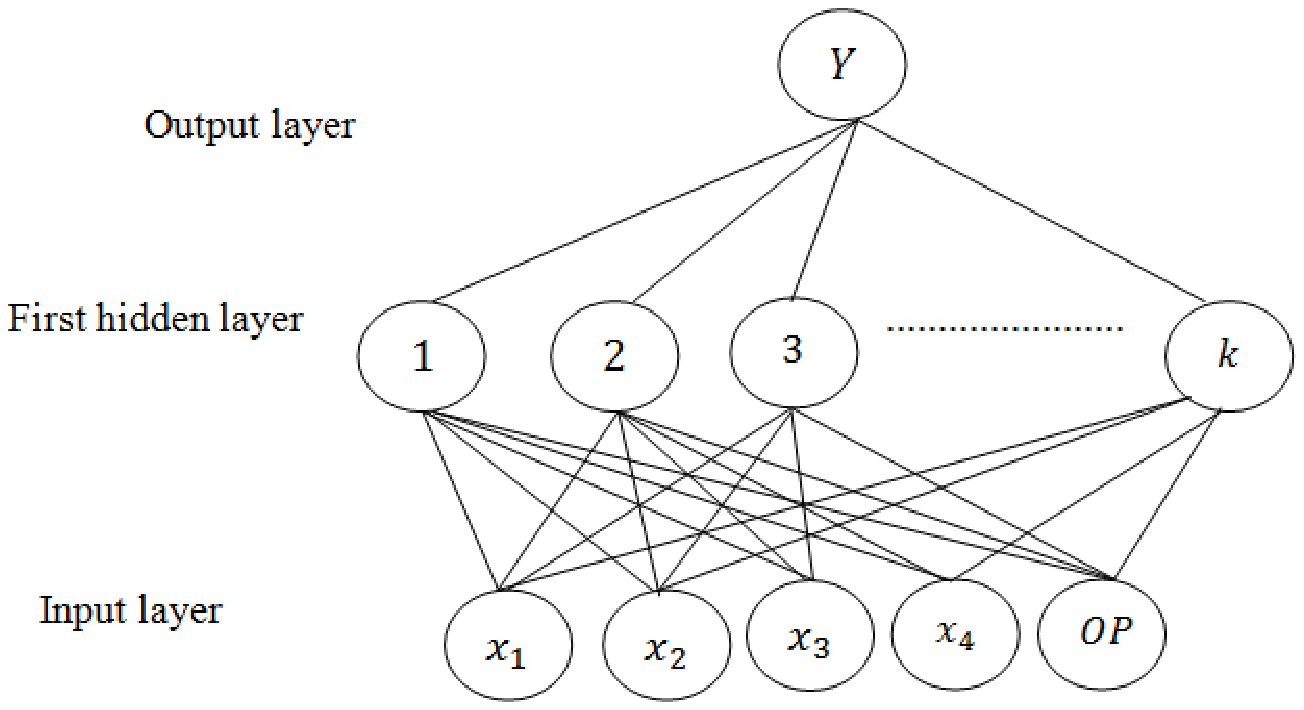}
\caption{An overview of Bayesian neural tree models. A CART (BCART) model is at the top and
its corresponding BNN (ANN) model at the bottom. ${\it OP}$ denotes the tree (CART/BCART) output.}
\label{figmodRWNGT}
\end{figure}

\subsubsection{BNT-2 model}
The BNT-2 model also follows a two-step pipeline. A BCART model fits the data in the first stage, with the best fitting tree found via posterior stochastic search. For feature selection in the context of BCART, \citet{bleich2014variable} illustrate three different schemes based on {\it variable inclusion proportions}, or the proportion of times a predictor variable is used for a split within each posterior sample. The three schemes differ in thresholding the inclusion proportions: `local', `global max', and `global SE' procedures. Any of the procedures can be utilized for feature selection based on the data and prediction problem at hand. In this work, we use the local thresholding procedure.\par
Thus, we record the important features and predictions from BCART and use these as inputs to a one-hidden-layer ANN in stage two. One hidden layer in the ANN sufficed, due to the incorporation of the selected features and predicted outputs from BCART. Using a single hidden layer also reduces the overall complexity of the model and the risk of overfitting in small and medium-sized data sets \citep{devroye2013probabilistic}. The optimal choice for the number of hidden neurons ($k$) for the ANN is derived under Proposition \ref{optimalnumberann} in Section \ref{secoptimalval}, and is given as $\sqrt{\frac{n}{d_{m}log(n)}}$, where $d_m$ is the dimension of the input feature space of the ANN, and $n$ is the training sample size. The final set of predictions is obtained after fitting the ANN model to the data. The precise algorithm is as follows.
\begin{algorithm}[h!]
\KwIn{$L_n = \{Y;X_1,\dots,X_d\}$}
\KwOut{$\hat Y$} 
\nl  Fit a BCART model to the data via a posterior stochastic search over the possible tree models.  
 \begin{itemize}
  \item Record $S \subseteq \{X_1,\dots,X_d\}$, the set of selected features obtained using a thresholding procedure.
  \item Record $\hat Y_{bcart}$, the prediction from BCART.
  \item Construct $S^{'} = \{S,\hat Y_{bcart}\}$, the complete set of features for the ANN model. Denote the dimension of $S^{'}$ as $d_m$.
  \end{itemize}
\nl Fit a one-hidden-layer ANN model with input feature set $S^{'}$, and with number of hidden neurons $k = \sqrt{\frac{n}{d_{m}log(n)}}$.
 \begin{itemize}
  \item Record $\hat Y$, the final set of predictions from the ANN.
  \end{itemize}
    \caption{{\bf BNT-2}}
\end{algorithm}

\section{Statistical Properties of the BNT models \label{sec3}}
From the results on the consistency of multivariate histogram-based regression
estimates on data-dependent partitions
\citep{nobel1996histogram, lugosi1996consistency}, and that of
regression estimates realized by an ANN \citep{lugosi1995nonparametric, devroye2013probabilistic}, we know that under certain conditions, both nonparametric models converge to the true density functions.
In Bayesian settings, posterior concentration of the BCART model \citep{rockova2017posterior}, and posterior consistency of the BNN model \citep{lee2000consistency, lee2004bayesian} have been previously explored. We use these results to prove the theoretical consistency of the BNT models
under certain conditions. We also find the optimal value of the number of hidden nodes in the BNT-2 model in Subsection \ref{secoptimalval}.

\subsection{Consistency of the BNT-1 Model}
Let $\mathbb{X}=(X_{1},X_{2},...,X_{d})$ be the space of all
possible values of $d$ features, and let $\mathbb{Y} = (Y_1,\dots,Y_n)^{'}$ be the response
vector, where each $Y_i$ takes values in $[-K,K]$, and $K \in \mathbb{R}$. A
regression tree (RT) $f : \mathbb{R}^{d} \rightarrow \mathbb{R}$ is
defined by assigning a number to each cell of a tree-structured
partition. We seek to estimate a regression function $r(x)= E(Y | X =
x) \in [-K,K]$ based on $n$ training samples $L_{n}=\{(X_{1},Y_{1}),
(X_{2},Y_{2}),...,(X_{n},Y_{n})\}$. The regression function $r(x)$
minimizes the predictive risk $J(f)=E\big|f(X) - Y\big|^2$ over all
functions $f: \mathbb{R}^d \rightarrow \mathbb{R}$. Practically, given the training data $L_n$, we can likely
find an estimate $\hat{f}$ of $f$ that minimizes the empirical risk
$$ J_n(f) = \frac{1}{n}\sum_{i=1}^{n}\Big(f(X_i)-Y_{i}\Big)^{2}$$
over a suitable class of regression estimates, since the distribution
of $(\mathbb{X},\mathbb{Y})$ is not known a priori. We let
$\Omega=\{\omega_{1},\omega_{2},...,\omega_{k}\}$ be a partition of
the feature space $\mathbb{X}$ and denote $\widetilde{\Omega}$ as
one such partition of $\Omega$. Define
$(L_{n})_{\omega_{i}}=\{(X_{i},Y_{i})\in L_{n}: X_{i}\in \omega_{i},
Y_{i}\in [-K,K]\}$ to be the subset of $L_{n}$ induced by
$\omega_{i}$ and let $(L_{n})_{\widetilde{\Omega}}$ denote the
partition of $L_{n}$ induced by $\widetilde{\Omega}$. 
Now define $\widehat{L}_{n}$ to be the space of all learning
samples and $\mathbb{D}$ be the space of all partitioning regression
functions. Then a binary partitioning rule
$f:\widehat{L}_{n} \rightarrow \mathbb{D}$ is such that $f \in (\psi
\circ \phi)(L_{n})$, where $\phi$ maps $L_{n}$ to some induced
partition $(L_{n})_{\widetilde{\Omega}}$ and $\psi$ is an assigning
rule which maps $(L_{n})_{\widetilde{\Omega}}$ to a partitioning
regression function $\hat{f}$ on the partition $\widetilde{\Omega}$.
Consistent estimates of $r(\cdot)$ can be achieved using an
empirically optimal regression tree if the size of the tree grows
with $n$ at a controlled rate.

\begin{theorem}
Suppose $(\mathbb{X},\mathbb{Y})$ is a random vector in
$\mathbb{R}^{p} \times [-K,K]$ and $L_{n}$ is the training set of $n$
outcomes. Finally, for every $n$
and $w_{i}\in \tilde{\Omega}_{n}$, the induced subset
$(L_{n})_{w_{i}}$ contains at least $k_{n}$ of the vectors of
$X_{1},X_{2},...,X_{n}$. Let $\hat{f}$ minimizes the empirical risk
$J_n(f)$ over all $k_n$ nodes of RT $f \in (\psi \circ
\phi)(k_{n})$. If $k_n \rightarrow \infty$ and $k_n = o
\Big(\frac{n}{log(n)}\Big)$, then $P \big|\hat{f} - r \big|^2 \rightarrow 0$
with probability 1. \label{theorem100}
\end{theorem}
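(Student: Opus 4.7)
The plan is to reduce Theorem \ref{theorem100} to the classical consistency framework for histogram-type regression on data-dependent partitions developed by \citet{nobel1996histogram} and \citet{lugosi1996consistency}. First I would observe that $\hat f$, as an empirical risk minimizer over RTs with $k_n$ terminal cells, can be written as a piecewise constant function on a partition $\widetilde\Omega_n$ with cell values equal to the within-cell averages of $Y_i$. I would then decompose the integrated squared error in the usual bias--variance form
\begin{equation}
E\bigl|\hat f(X)-r(X)\bigr|^2 \;\leq\; 2\,E\bigl|\hat f(X)-f^{\ast}(X)\bigr|^2 \;+\; 2\,E\bigl|f^{\ast}(X)-r(X)\bigr|^2,
\end{equation}
where $f^{\ast}$ is the $L^2$-projection of $r$ onto the span of partitioning regression functions on $\widetilde\Omega_n$. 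The first (estimation) term will be controlled by uniform deviation arguments over the class of $k_n$-cell partitioning rules, and the second (approximation) term by a cell-diameter argument based on the fact that $r$ is bounded and the cells shrink as $k_n\to\infty$.

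For the estimation term, the central device is a combinatorial bound on the number of distinct partitions of $\{X_1,\ldots,X_n\}$ induced by binary trees with $k_n$ leaves. Axis-aligned splits give a growth function of order $(n\,d)^{O(k_n)}$, so the log-cardinality is $O(k_n\log n)$. Combining this with Hoeffding's inequality applied uniformly over the finite family of empirical partition-valued estimators (since $Y\in[-K,K]$), and invoking the Borel--Cantelli lemma, I would obtain that the estimation error tends to $0$ almost surely provided $k_n\log(n)/n\to0$, which is exactly the hypothesis $k_n = o\!\bigl(n/\log n\bigr)$. The requirement that each cell $w_i$ contains at least $k_n$ training points keeps within-cell averages stable and rules out degenerate partitions that would otherwise inflate the variance.

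For the approximation term, I would use the shrinking-cells argument of \citet{lugosi1996consistency}: since $\hat f$ is the empirical risk minimizer among all $k_n$-cell trees and $k_n\to\infty$, the empirically optimal partition is at least as good as any fixed sequence of partitions whose cell diameters shrink to $0$ (e.g., cubic partitions with side length $1/k_n^{1/d}$). Boundedness of $r$ combined with dominated convergence then gives
\begin{equation}
E\bigl|f^{\ast}(X)-r(X)\bigr|^2 \longrightarrow 0.
\end{equation}
Putting both terms together yields the almost-sure consistency claimed in the theorem.

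I expect the main obstacle to be the uniform deviation step: one has to argue carefully that the supremum over the data-dependent class of $k_n$-leaf trees can be controlled by the cardinality bound $(nd)^{O(k_n)}$, because the partition itself depends on the $Y_i$'s through the MSE-minimization splitting rule, not only on the $X_i$'s. This is the key place where the rate $k_n=o(n/\log n)$ is sharp, and the argument has to invoke a VC-type covering argument for tree-structured partitioning classes rather than a naive union bound over a pre-specified grid of partitions.
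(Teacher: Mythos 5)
Your proposal is sound and follows essentially the same route as the paper: the paper gives no in-line argument, deferring entirely to Theorem 1 of \citet{chakraborty2019novel}, which rests on the same Nobel--Lugosi framework for regression on data-dependent partitions that you reconstruct---a combinatorial growth bound for the family of $k_n$-leaf axis-aligned tree partitions (log-cardinality $O(k_n\log n)=o(n)$, matching $k_n=o(n/\log n)$) to control the estimation error, and a shrinking-cell/ERM-comparison argument for the approximation error. The only remark worth adding is that the obstacle you flag at the end is already resolved by the device you invoke: the uniform deviation bound is taken over \emph{all} partitions realizable by $k_n$-leaf trees with splits at data points, so the dependence of the empirically selected partition on the $Y_i$'s is absorbed automatically by that union/growth-function bound rather than requiring a separate covering argument.
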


\begin{proof}
For proof, one may refer to \citet[Theorem 1]{chakraborty2019novel}.
\end{proof}

The BNT-1 model essentially uses the feature selection
mechanism of RT and RT output also plays an important role in
designing the ensemble model. We further build a one hidden layered
BNN model using RT given features as well as RT output as an another
feature in the input space of BNN. We denote the dimension of the
input feature space of BNN model in the ensemble as $d_m \; (\leq
d)$. We further assume that these covariates are fixed and have been rescaled to $[0,1]^{d_m} = \mathbb{C}^{d_m}$.Now, let the random variables $\mathbb{Z}$ and $\mathbb{Y}$
take their values from $\mathbb{C}^{d_{m}}$ and $[-K,K]$
respectively. Denote the measure of $\mathbb{Z}$ over
$\mathbb{C}^{d_{m}}$ by $\mu$ and $m:\mathbb{C}^{d_{m}}\rightarrow
[-K,K]$ be a measurable function such that $m(Z)$ approximates $Y$.
Given the training sequence $\xi_{n}=\{ (Z_{1},Y_{1}),
(Z_{2},Y_{2}),...,(Z_{n},Y_{n}) \}$ of $n$ iid copies of
($\mathbb{Z},\mathbb{Y}$), the parameters of the neural network
regression function estimators are chosen such that it minimizes the
empirical $L_{2}$ risk = $\frac{1}{n}
\sum_{j=1}^{n}|f(Z_{j})-Y_{j}|^{2}$. We have used logistic squasher
as sigmoid function in BNN and treat the number of hidden nodes
($k$) as a parameter in the proposed Bayesian ensemble formulation.
In usual Bayesian nonparametrics, number of hidden nodes grows with
the sample size and thus we can use an arbitrarily large number of
hidden nodes asymptotically. But we use the formulation by
\citet{insua1998feedforward} and treat number of hidden nodes in the
ensemble model as a parameter and show that the joint posterior
becomes consistent under certain regularity conditions. Following
\citet{insua1998feedforward} we consider geometric prior for $k$. This will give better uncertainty quantification by allowing unconstrained size of
the hidden nodes. The major advantage of using Bayesian setting over
frequentist approach is that it allows one to use background
knowledge to select a prior probability distribution for the model
parameters. Also the predictions of the future observations are made
by integrating the model's prediction with respect to the posterior
parameter distributions obtained by updating the prior by taking
into account the data. We address this by properly defining the
class of prior distribution for neural network parameters that reach
sensible limits when the size of the networks goes to infinity and
further implementing markov chain monte carlo algorithm in the
network structure \citep{mackay1992v2}. We define
\begin{equation}
E\big[Y_i | Z_i = z_i\big] = \beta_0 +
\sum_{j=1}^{k}\beta_j\sigma(a_j^{T}z_i)+\epsilon_i, \label{1}
\end{equation}
where $k$ is the number of hidden nodes, $\beta_j$'s are the weights
of these hidden nodes, $a_j$'s are vectors of location and scale
parameters, and $\epsilon_i \stackrel{iid}{\sim}
\mathcal{N}(0,\sigma^{2})$. Expanding (\ref{1}) in vector notation yields the following equation:
\begin{equation}
y_i = \beta_0 +
\sum_{j=1}^{k}\beta_j\sigma\bigg(a_{j0}+\sum_{h=1}^{d_m}a_{jh}z_{h}\bigg)+\epsilon_i,
\label{2}
\end{equation}
where $d_m$ is the number of input features. We consider the asymptotic properties of neural network in the Bayesian setting. We show
consistency of the posterior for neural networks in Bayesian
setting which along with Theorem \ref{theorem100} ensures the consistency of the proposed BNT-1 model. 

Let $\lambda_i = P(k=i)$ be the prior probability that the number of
hidden nodes is $i$, and of course $\sum_{i} \lambda_i = 1$. Also, $\Pi_i$
be the prior for the parameters of the regression equation, given
that $k = i$. We can then write the joint prior for all the
parameters as $\sum \lambda_i\Pi_i$. Here we consider $\Pi_i
\stackrel{ind}{\sim} \mathcal{N}(0,\tau^{2})$ and the prior for $k$
be geometric distribution. In the sequel, we also assume that $$ Y | Z = z \sim \mathcal{N}\Bigg( \beta_0 + \sum_{j=1}^{k}\frac{\beta_j}{1+\exp\Big({-a_{j0}-\sum_{h=1}^{d_m}a_{jh}z_{h}}\Big)}, 1 \Bigg). $$
Let $f_0(z,y)$ be the true density. We can define a family of Hellinger neighborhoods as
$$
H_\epsilon = \{f\; ; \; D_H(f,f_0)\leq \epsilon\},
$$
with $D_H(f,f_0)$ as defined below:
$$
D_H(f,f_0) = \mathlarger{\sqrt{\int \int \Bigg(\sqrt{f(z,y)} - \sqrt{f_0(z,y)}\Bigg)^2dz dy}}.
$$
Let $\mathscr{F}_{n}$ be the set of all neural networks with parameters $|a_{jh}|\leq C_n$ and  $|\beta_j|\leq C_n$, where $j=1,\dots,k$ and $h=1,\dots,d_m$, and $C_n$ grows with $n$ such that $C_n \leq \text{exp } (n^{(b-a)})$ for any constant $b$ such that $0<a<b<1$ when $k\leq n^a$. The Kullback-Leibler divergence (not a distance metric) is defined as 
$$
D_K(f_0,f) = E_{f_0}\Bigg[\text{log } \frac{f_0(z,y)}{f(z,y)}\Bigg].
$$
For any $\gamma > 0$, we define Kullback-Leibler neighborhood by $$ K_\gamma = \{ f \; : D_K(f_0,f) \leq \gamma \}. $$ 
We denote the prior for $f$ by $\Pi_n(\cdot)$ and the posterior by $P\big(\cdot \; |
(Z_1,Y_1), ...,(Z_n,Y_n)\big).$ Now we are going to
present results on the asymptotic properties of the posterior
distribution for the neural network model present in the ensemble BNT-1 model over Hellinger neighborhoods.

\begin{theorem}
Assume that $\mathbb{Z}$ is uniformly distributed in $[0,1]^{d_m}$, $\Pi_i
\stackrel{ind}{\sim} \mathcal{N}(0,\tau^{2})$, $k \sim
\mbox{Geometric}$, and the following conditions hold:\\
(A1) For all $i$, we have $\lambda_i > 0$; \\
(A2) $B_n \uparrow n$, for all $r>0$, there exists $q>1$ and $N$
such that
$\sum_{i=B_n+1}^{\infty} \lambda_i < exp\big( -n^q r \big)$ for $n \geq N$; \\
(A3) There exists $r_i > 0, N_i$ such that
$\Pi_n(\mathscr{F}_{n}^{c})
< exp(-nr_i)$ for all $n \geq N_i$;\\
(A4) For all $\gamma, v >0$, there exists $I$ and $M_i$ such that
for any $i \geq I$, $\Pi_i(K_\gamma)
\geq exp(-nv)$ for all $n \geq M_i$.\\
Then for all $\epsilon > 0$, posterior is asymptotically consistent
for $f_0$ over Hellinger neighborhoods, i.e.,\\ $P\big(H_\epsilon \; |
(Z_1,Y_1), ...,(Z_n,Y_n)\big) \rightarrow 1$ in probability.
\label{theo2}
\end{theorem}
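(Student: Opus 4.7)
The plan is to follow the Schwartz-type posterior consistency framework developed for feed-forward neural networks by \citet{lee2000consistency} and \citet{lee2004bayesian}, adapting the sieve and prior construction to the geometric--Gaussian hierarchical prior used in BNT-1. Writing Bayes' rule in the form
\[
P\bigl(H_\epsilon^c \mid (Z_1,Y_1),\dots,(Z_n,Y_n)\bigr) = \frac{\int_{H_\epsilon^c} R_n(f)\,d\Pi_n(f)}{\int R_n(f)\,d\Pi_n(f)}, \quad R_n(f)=\prod_{i=1}^n\frac{f(Z_i,Y_i)}{f_0(Z_i,Y_i)},
\]
the strategy is to show that, with probability tending to one, the denominator exceeds $\exp(-n\gamma)$ while the numerator is at most $\exp(-n\gamma')$ for constants $0<\gamma<\gamma'$, so that the ratio tends to zero.

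For the denominator, I would invoke the Kullback--Leibler support condition (A4) together with the strict positivity (A1) to locate an index $i$ with $\lambda_i\,\Pi_i(K_\gamma)\geq \exp(-nv)$ for any prescribed $v>0$. Restricting the integral to this $K_\gamma$ slab and applying Jensen's inequality together with the strong law of large numbers for $\log(f_0/f)$, in the style of Lemma 4 of \citet{lee2000consistency}, gives the desired exponential lower bound once $v$ and $\gamma$ are taken small enough.

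For the numerator, I would partition $H_\epsilon^c$ into three regions and bound them separately: densities arising from $k>B_n$, densities with $k\leq B_n$ whose parameters fall outside $\mathscr{F}_n$, and densities in $\mathscr{F}_n\cap H_\epsilon^c$. The first two contributions are handled directly by the tail conditions (A2) and (A3), producing bounds $\exp(-n^q r)$ and $\exp(-n r_i)$ that lie well below the denominator bound. The third piece is controlled by uniformly exponentially consistent Hellinger tests built via Birg\'e's covering-and-combining technique: cover $\mathscr{F}_n$ in Hellinger distance by balls of radius $\epsilon/2$, test $f_0$ against each cover element using its local likelihood ratio, and take the maximum test statistic.

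The main obstacle will be the metric entropy bound on $\mathscr{F}_n$. Because $\mathscr{F}_n$ contains networks with at most $n^a$ hidden nodes and weights bounded by $C_n \leq \exp(n^{b-a})$, the ambient parameter count is $O(n^a d_m)$ and a standard Lipschitz argument for the logistic squasher yields a log-covering number of order $n^a \log C_n = O(n^b)$. Since $b<1$ this is $o(n)$, so the global test error remains of order $\exp(O(n^b))\cdot\exp(-c n \epsilon^2)$, which still decays exponentially. The delicate part is calibrating $a$, $b$, $v$, and $\gamma$ simultaneously so that the test-error exponent strictly dominates the Kullback--Leibler lower bound on the denominator; this is the same balancing act carried out in Theorem 1 of \citet{lee2000consistency}. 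Once it is achieved, Borel--Cantelli together with Markov's inequality combine all three numerator pieces and give $P(H_\epsilon^c\mid \text{data})\to 0$ in probability, which is the claim.
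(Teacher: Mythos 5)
Your proposal is correct and follows essentially the same route as the paper: both write the posterior of $H_\epsilon^c$ as a ratio of likelihood-ratio integrals, lower-bound the denominator through the Kullback--Leibler support condition (A4) in the style of Lemma 5 of Lee (2000), and kill the numerator by splitting off the $k>B_n$ and out-of-sieve pieces via (A2)--(A3) and handling $\mathscr{F}_n\cap H_\epsilon^c$ with exponentially consistent tests/entropy bounds, which the paper imports wholesale from Wong and Shen (1995) rather than constructing explicitly as you do. The only portion of the paper's proof you omit is its re-verification that the geometric--Gaussian prior actually satisfies (A1)--(A4) (via the geometric tail, Mill's ratio, and a prior-mass lower bound), which is not required for the statement as given since those conditions appear among the hypotheses.
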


\begin{proof}
To prove the theorem, we first show that the regularity conditions hold when we assume a Geometric prior for $k$. And finally, show the posterior consistency by using conditions (A1)-(A4).\par
Since we take geometric prior for $k$, it is obvious that
$\lambda_i > 0$.
\begin{align}
& \mbox{Now,} \; \sum_{i=B_n+1}^{\infty} \lambda_i = P(k > B_n) =
\sum_{i=B_n+1}^{\infty} p(1-p)^i = (1-p)^{B_n+1} \nonumber \\
& \quad \quad \quad \quad = exp\big[(B_n+1)log(1-p)\big] \nonumber \\
& \quad \quad \quad \quad = exp\bigg[-n^q\big(-log(1-p)\big)\bigg] \quad \mbox{(Using $B_n = O(n^q)$ for $q>1$)} \nonumber \\
& \quad \quad \quad \quad \leq exp\big(-n^q.r\big) \quad \mbox{for
$r > 0$ and sufficiently large $n$} \label{2star}
\end{align}
We consider a geometric prior with parameter $p$. Also let,
$B_n = O(n^q)$ for any $q>1$. For any $i$, we write $i < n^a$ for $a
> 0$ and sufficiently large $n$, where $\theta$ be the vector of all
parameters (other than $k$):
\begin{align}
& \Pi_{i}\big(\mathscr{F}_n^c\big) = \int_{\mathscr{F}_n^c}\Pi_i(\theta)d\theta \nonumber \\
& \quad \quad \quad \quad \leq \sum_{i=1}^{d_n} 2
\int_{C_n}^{\infty} \phi \bigg( \frac{\theta_i}{\tau_i} \bigg)
d\theta_{i} \nonumber \\
& \quad \quad \quad \quad = d_n \bigg[ 2\tau
\int_{\frac{C_n}{\tau_i}}^{\infty} \phi(\tau_i)d\tau_i  \bigg] \nonumber \\
& \quad \quad \quad \quad \leq d_n \bigg[ \frac{2 \tau \phi
\big(\frac{C_n}{\tau_i}\big)}{\frac{C_n}{\tau_i}} \bigg] \quad
\mbox{by Mill's ratio} \nonumber \\
& \quad \quad \quad \quad = d_n \bigg[
\frac{2\tau_i^2}{C_n}.\frac{1}{\sqrt{2\Pi}}.exp\bigg( -
\frac{C_n^2}{2\tau_i^2} \bigg) \bigg] \nonumber \\
& \quad \quad \quad \quad \leq d_n
\bigg(\tau_i^2\sqrt{\frac{2}{\Pi}}\bigg). exp\bigg( -n^{b-a} -
\frac{1}{2\tau_i^2}e^{2n^{b-a}} \bigg) \quad \big[ \mbox{Taking}
\quad C_n=e^{n^{b-a}},
0<a<b<1 \big] \nonumber \\
& \quad \quad \quad \quad = exp\bigg[ -n^{b-a} + \log\bigg(
d_n\tau_i^2 \sqrt{\frac{2}{\Pi}} \bigg) \bigg].exp \bigg(
- \frac{1}{2\tau_i^2}e^{2n^{b-a}} \bigg) \nonumber \\
& \quad \quad \quad \quad \leq exp \bigg( -
\frac{1}{2\tau_i^2}e^{2n^{b-a}} \bigg) \quad \big[ \mbox{Using} \quad d_n = (p+2)n^a+1 \leq (p+3)n^a \quad \mbox{for large $n$} \big] \nonumber \\
& \quad \quad \quad \quad \leq e^{-nr_i}, \quad \mbox{where} \quad
e^{2n^{b-a}} > n \quad \mbox{for large $n$ and taking} \quad
r=\frac{1}{2\tau_i^2}. \label{star}
\end{align}
We can write $$\mathscr{G}_n^c = \bigcup_{i=0}^{\infty}
\mathscr{F}_i^c,$$ where $\mathscr{F}_i$ be the set of all neural
networks with $i$ nodes and with all the parameters less than
$C_n$ in absolute value, $C_n \leq exp(n^b), 0<b<1.$
$$ \Pi(\mathscr{G}_n^c) = \sum_{i=0}^{\infty}
\lambda_i \Pi_i(\mathscr{F}_i^c) \leq \sum_{i=0}^{B_n}\lambda_i
\Pi_i(\mathscr{F}_i^c) + \sum_{i=B_n+1}^{\infty}\lambda_i = I_1 +
I_2.$$
To handle $I_1$ and $I_2$, we use (\ref{star}) and (\ref{2star}):
\begin{align*}
& I_1 \leq \sum_{i=0}^{B_n}\lambda_i exp(-nr_i) \nonumber \\
& \quad \leq exp(-nr^{*})\sum_{i=0}^{B_n}\lambda_i \quad \big( \mbox{By letting} \quad r^{*} = min\{r_0,r_1,...,r_{B_n}\} \big) \nonumber \\
& \quad \leq exp(-nr^{*}).
\end{align*}
And, $I_2 \leq exp(-n^qr^{*}) \quad \mbox{for sufficiently large
n.}$ For large $n, q>1, \quad \mbox{and} \quad r=r^{*}/2$, we have
$$\Pi(\mathscr{G}_n^c) < exp(-nr).$$
\begin{align}
& \Pi_{i}(M_\delta) = \prod_{i=1}^{\hat{d_n}}
\int_{\theta_i-\delta}^{\theta_i+\delta} \frac{1}{\sqrt{2\Pi
\tau^2}}.exp\bigg( -
\frac{u^2}{2\tau^2} \bigg)du \nonumber \\
& \quad \quad \quad \geq \prod_{i=1}^{\hat{d_n}} 2\delta \inf_{u \in
[\theta_i-1, \theta_i+1]} \frac{1}{\sqrt{2\Pi \tau^2}}.exp\bigg( -
\frac{u^2}{2\tau^2} \bigg) \nonumber \\
& \quad \quad \quad = \prod_{i=1}^{\hat{d_n}} \delta
\sqrt{\frac{2}{\Pi \tau^2}}.exp\bigg( - \frac{\xi_i}{2\tau^2} \bigg)
\quad \big[\xi_i = max\{ (\theta_i-1)^2, (\theta_i+1)^2 \} \big]
\nonumber \\
& \quad \quad \quad \geq \bigg( \delta \sqrt{\frac{2}{\Pi \tau^2}}
\bigg)^{\hat{d_n}}.exp\bigg( - \frac{\hat{\xi}\hat{d_n}}{2\tau^2}
\bigg) \quad \big[\hat{\xi} = \max_i\{ \xi_1, \xi_2, ... ,
\xi_{\hat{d_n}} \} \big]
\nonumber \\
& \quad \quad \quad > e^{-nv} \quad \big[ \mbox{Using} \quad
\hat{d_n} \leq (p+3)n^a \quad \mbox{and for large $n$ and for any
$v$} \big]. \label{3star}
\end{align}
For any $\delta > 0$, let $l$ be the number of hidden nodes required
by the theorem for making $g_0$ continuous and square
differentiable. Using (\ref{3star}) we write
$$ \Pi(M_\delta) = \sum_{i=0}^{\infty}
\lambda_i \Pi_i(M_\delta) \geq \lambda_l \Pi_l(M_\delta) \geq
\lambda_l exp(-nv^{*}).$$ For sufficiently large $n$ and for any
$v^{*}$, $l$ is a constant, thus $\lambda_l$ does not depend on $n$
and is positive for geometric prior. Thus, $\Pi(M_\delta) \geq
\exp(-nv)$ for any sufficiently large $n$.

We can now use conditions (A1)-(A4) to show that $P\big(H_\epsilon \; |
(Z_1,Y_1), ...,(Z_n,Y_n)\big) \rightarrow 1$ in probability. Alternatively, $P\big(H_\epsilon^{c} \; |
(Z_1,Y_1), ...,(Z_n,Y_n)\big) \rightarrow 0$ in probability. Now,
\begin{align*}
& P\big(H_\epsilon^{c} \; | (Z_1,Y_1), ...,(Z_n,Y_n)\big) = \frac{\int_{H_\epsilon^{c}}\prod_{i=1}^{n}f(z_i,y_i)d\Pi_n(f)}{\int\prod_{i=1}^{n}f(z_i,y_i)d\Pi_n(f)} \\
& \quad \quad \quad \quad \quad \quad \quad \quad \quad \quad \quad \quad = \frac{\int_{H_\epsilon^{c}}R_n(f)d\Pi_n(f)}{\int R_n(f)d\Pi_n(f)}, \; \mbox{where} \; R_n(f) = \frac{\prod_{i=1}^{n}f(z_i,y_i)}{\prod_{i=1}^{n}f_0(z_i,y_i)} \\
& \quad \quad \quad \quad \quad \quad \quad \quad \quad \quad \quad \quad = \frac{D_1}{D_2}.
\end{align*}
Using \citet{wong1995probability} and (A1-A4), we can find the supremum of the likelihood ratios $R_n(f)$. Thus, we have $ D_1 < e^{\frac{-nt}{2}} + e^{-2c_2\epsilon^2}, \; t,c_2 > 0. $\\
Using \citet[Lemma 5]{lee2000consistency} along with (A1-A4) we have $ D_2 > e^{-n\delta} $ for large $n$, except on a set with probability approaching to $0$.\\
Finally, we have 
\begin{align*}
& P\big(H_\epsilon^{c} \; | (Z_1,Y_1), ...,(Z_n,Y_n)\big) < \frac{e^{\frac{-nt}{2}} + e^{-2c_2\epsilon^2}}{e^{-n\delta}} \\
& \quad \quad \quad \quad \quad \quad \quad \quad \quad \quad \quad \quad \leq e^{-n\epsilon^{'}} + e^{n\epsilon^2\epsilon^{'}}, \; \mbox{where} \; \epsilon^{'} > 0 \\
& \quad \quad \quad \quad \quad \quad \quad \quad \quad \quad \quad \quad \rightarrow 0 \; \mbox{for sufficiently large $n$}.
\end{align*}
\end{proof}
\begin{remark}
Theorem \ref{theo2} shows that the posterior is consistent when the number of hidden neurons of the neural network (with Bayesian setting) is a parameter that can be estimated from the data. Thus, we can let the data derive the number of hidden nodes in the model and emphasize on model selection during practical implementation.
\end{remark}

\subsection{Consistency and optimal value of a parameter for the BNT-2 model} \label{secoptimalval}
We consider the nonparametric regression model 
\[
Y_i=f_0(X_i) + \epsilon_i, \; \; \epsilon_i \stackrel{iid}{\sim} \mathcal{N}(0,1),
\]
where the output variable $\mathbb{Y}=(Y_1,Y_2,...,Y_n)^{'}$ is dependent on a set of $d$ potential covariates $\mathbb{X}= (X_{i1},X_{i2},...,X_{id})^{'}$, $1 \leq i \leq n$. We further assume that these covariates are fixed and have been rescaled such that every $X_{ij} \in [0,1]^{d}=\mathbb{C}^{d}$, $1 \leq i \leq n$ and $1 \leq j \leq d$. The true unknown response surface $f_0(X_i)$ is assumed to be smooth. In a recent work \citep{rockova2017posterior}, it was shown that the BCART model achieves a near-minimax-rate optimal performance when approximating a single smooth function. Thus, optimal behavior of a BCART model is guaranteed, and even under a suitably complex prior on the number of terminal nodes, a BCART model is reluctant to overfit. In the BNT-2 model, we build a BCART model in the first stage, and perform variable (feature) selection as in \citet{bleich2014variable}, which ensures that we can obtain a consistent BCART model under the assumptions of Theorem 4.1 of \citet{rockova2017posterior}. The selected important features along with the BCART outputs are trained using an ANN model with one hidden layer. We denote the dimension of the input feature space of this ANN model as $d_m \; (\leq d)$. The rescaled feature space is denoted by $\mathbb{C}^{d_m} = [0,1]^{d_m}$. Using one hidden layer in the ANN makes the BNT-2 model less complex and fastens its actual implementation. Moreover, there is no theoretical 
gain in considering more than one hidden layer in an ANN \citep{devroye2013probabilistic}. Below, we establish sufficient conditions for consistency of the BNT-2 model along with the optimal value of the number of hidden nodes $k$.\par
Let the rescaled set of features of the ANN be $\mathbb{Z}$. $\mathbb{Z}$ and $\mathbb{Y}$ take
values from $\mathbb{C}^{d_{m}}$ and $[-K,K]$, respectively.
We denote the measure of $\mathbb{Z}$ over $\mathbb{C}^{d_{m}}$ by
$\mu$ and $m:\mathbb{C}^{d_{m}}\rightarrow [-K,K]$ be a measurable
function that approximates $\mathbb{Y}$. Given the training sequence $(\mathbb{Z}, \mathbb{Y})$ of $n$ i.i.d copies, the neural network hyperparameters are
chosen by empirical risk minimization. We consider the class of neural networks having a logistic sigmoidal activation function in the hidden layer and $k$ hidden neurons, with bounded output weights
\[
\mathscr{F}_{n,k}=\Bigg\{
\sum_{i=1}^{k}c_{i}\sigma(a_{i}^{T}z+b_{i})+c_{0} : k \in
\mathbb{N}, a_{i} \in \mathbb{R}^{d_{m}}, b_{i},c_{i} \in
\mathbb{R}, \sum_{i=0}^{k}|c_{i}|\leq \beta_{n} \Bigg\},
\]
and obtain $m_{n} \in \mathscr{F}_{n,k}$ satisfying
\[
\frac{1}{n} \sum_{i=1}^{n}|m_{n}(Z_{i})-Y_{i}|^{2}  \leq \frac{1}{n}
\sum_{i=1}^{n}|f(Z_{i})-Y_{i}|^{2}, \; \mbox{if} \; f \in
\mathscr{F}_{n,k},
\]
where, $m_{n}$ is a function that minimizes the empirical $L_{2}$
risk in $\mathscr{F}_{n,k}$. The theorem below, due to \citet[Theorem
3]{lugosi1995nonparametric}, states the sufficient conditions for the consistency of the neural network. 

\begin{theorem}
Consider an ANN with a logistic sigmoidal activation function having one hidden layer with $k \; (>1)$ hidden nodes. If $k$ and $\beta_{n}$ are chosen to satisfy $$k \rightarrow \infty, \; \beta_{n}\rightarrow \infty, \; \frac{k\beta_{n}^{4}log(k\beta_{n}^{2})}{n} \rightarrow 0$$ as $n \rightarrow \infty$, then the model is said to be consistent for all distributions of $(\mathbb{Z}, \mathbb{Y})$ with $\mathbb{E}|\mathbb{Y}|^2 < \infty$.\label{theo3}
\end{theorem}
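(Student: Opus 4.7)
The plan is to follow the classical bias-variance decomposition for nonparametric regression consistency, controlling an approximation term and an estimation term separately. I would start from the inequality
\[
\mathbb{E}|m_n(Z)-f_0(Z)|^2 \leq 2\inf_{f \in \mathscr{F}_{n,k}} \mathbb{E}|f(Z)-f_0(Z)|^2 \; + \; 2\Big(\mathbb{E}|m_n(Z)-Y|^2 - \inf_{f \in \mathscr{F}_{n,k}} \mathbb{E}|f(Z)-Y|^2\Big),
\]
where $f_0(z)=\mathbb{E}[Y\mid Z=z]$, and aim to show that both right-hand summands vanish as $n\to\infty$ under the stated growth rates on $k$ and $\beta_n$.

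For the approximation term, I would invoke the universal approximation property of single-hidden-layer networks with logistic sigmoidal activation (Cybenko; Hornik-Stinchcombe-White), extended to $L_2(\mu)$ approximation by Hornik. Since $k\to\infty$ and $\beta_n\to\infty$, the class $\mathscr{F}_{n,k}$ is eventually $L_2(\mu)$-dense in the measurable functions with $\mathbb{E}|f_0(Z)|^2<\infty$ on the compact set $\mathbb{C}^{d_m}$, so the first infimum tends to zero. This step does not yet use the precise rate, only that both $k$ and $\beta_n$ diverge.

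For the estimation term, I would use uniform convergence of empirical averages over $\mathscr{F}_{n,k}$. Because the theorem is stated for unbounded $Y$ with $\mathbb{E}|Y|^2<\infty$, I would first truncate $Y$ at level $T_n\to\infty$ chosen so that the tail contribution vanishes by Markov. On the truncated problem, functions in $\mathscr{F}_{n,k}$ are bounded by $\beta_n$ since $|\sigma|\leq 1$ and $\sum_{i=0}^k|c_i|\leq \beta_n$. I would then bound the $L_1$ covering number of $\mathscr{F}_{n,k}$ using the Lipschitz property of $\sigma$ together with the $\ell_1$ constraint on the output weights, which converts the problem into covering a convex hull of at most $k$ sigmoid ridges. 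A Pollard-type uniform deviation inequality yields
\[
\mathbb{P}\!\left(\sup_{f \in \mathscr{F}_{n,k}} \Big|\tfrac{1}{n}\sum_{i=1}^n |f(Z_i)-Y_i|^2 - \mathbb{E}|f(Z)-Y|^2\Big| > \epsilon\right) \longrightarrow 0
\]
precisely when $k\beta_n^4 \log(k\beta_n^2)/n \to 0$; combined with the definition of $m_n$ as an empirical risk minimizer, this controls the estimation term.

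The main technical obstacle I anticipate is matching the exact exponents in the stated rate. The fourth power of $\beta_n$ inside the logarithm is not arbitrary: it arises from squaring the loss (doubling one power of $\beta_n$) and from the covering number scaling like $(k\beta_n^2)^{O(k\beta_n^2)}$ when the sigmoids are re-expressed through the $\ell_1$ hull of their ridge functions. A naive covering argument overshoots this exponent and breaks consistency; obtaining the sharp bound requires separating the output-layer parameters (controlled linearly through $\beta_n$) from the inner-layer parameters (controlled through the Lipschitz constant of $\sigma$ on the projected input set), and invoking Pollard's pseudo-dimension bound rather than a crude Euclidean covering. Once that bound is in hand, verifying that the assumed rate kills the estimation term is routine.
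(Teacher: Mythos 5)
Your plan --- splitting the $L_2$ error into an approximation term (driven to zero by the universal $L_2(\mu)$-approximation property of one-hidden-layer logistic networks once $k\to\infty$ and $\beta_n\to\infty$) and an estimation term (handled by truncating $Y$, bounding the $L_1$ covering number of $\mathscr{F}_{n,k}$, and applying Pollard's uniform deviation inequality, which is exactly where $k\beta_n^4\log(k\beta_n^2)/n\to 0$ is needed) --- is precisely the argument behind the result the paper invokes: the paper gives no proof of its own, deferring to Gy\"orfi et al.\ (Chapter 16), i.e.\ the Lugosi--Zeger theorem, whose proof proceeds along these very lines. One small correction to your accounting of the exponents: the covering number of $\mathscr{F}_{n,k}$ at scale $\varepsilon/(32\beta_n)$ is of the form $\bigl(c\,\beta_n^2 k/\varepsilon\bigr)^{O(k d_m)}$, so its exponent is linear in $k$ (not of order $k\beta_n^2$) and $\beta_n^2$ enters only through the base, producing the $\log(k\beta_n^2)$ factor, while the $\beta_n^4$ factor comes solely from the squared range $(4\beta_n^2)^2$ of the loss class in the exponent of Pollard's inequality.
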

\begin{proof}
For the proof, one may refer to \citet[Chapter 16]{gyorfi2006distribution}.
\end{proof}
Now, we obtain an upper bound on $k$ using the rate of convergence of a neural network with bounded output weights. In what follows, we have assumed that $m$
is Lipschitz $(\delta, C)$-smooth according to the following
definition:
\begin{definition}
A function $m:C^{d_m} \rightarrow [-K,K]$ is called Lipschitz $(\delta,
C)$-smooth if it satisfies the following equation: $$|m(z^{'})-m(z)|
\leq C\|z^{'}-z\|^\delta$$ for all $\delta \in [0,1]$, $z^{'}, z \in
\mathbb{C}^{d_m}$, and $C \in \mathbb{R}_{+}$.
\end{definition}

\begin{proposition} \label{optimalnumberann}
Assume that $\mathbb{Z}$ is uniformly distributed in $\mathbb{C}^{d_m}$ and $\mathbb{Y}$ is bounded a.s. and $m$ is Lipschitz $(\delta, c)$-smooth. Under the assumptions of Theorem (\ref{theo3}) with fixed $d_{m}$, and $m, f \in \mathscr{F}_{n,k}$, also $f$ satisfying
$\int_{C^{d_m}}f^{2}(z)\mu(dz)<\infty$, we have $k = O \bigg
(\sqrt{\frac{n}{d_{m}log(n)}} \bigg )$.
\end{proposition}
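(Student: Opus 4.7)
The plan is to decompose the $L_2$ risk of the empirical risk minimizer $m_n \in \mathscr{F}_{n,k}$ into a stochastic (estimation) component and a deterministic (approximation) component, then pick $k$ so that these two contributions are of the same order. Concretely, I would start from the standard inequality
\[
\mathbb{E}\|m_n - m\|_{L_2(\mu)}^2 \;\leq\; 2\,\mathbb{E}\!\!\sup_{f\in\mathscr{F}_{n,k}}\!\Big|\tfrac{1}{n}\!\sum_{i=1}^n|f(Z_i)-Y_i|^2 - \mathbb{E}|f(Z)-Y|^2\Big| \;+\; 2\inf_{f\in\mathscr{F}_{n,k}}\|f-m\|_{L_2(\mu)}^2,
\]
and control the two summands separately under the hypotheses of Theorem \ref{theo3}.

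For the approximation term, I would use the Lipschitz $(\delta,c)$-smoothness of $m$ together with a sigmoidal approximation result (e.g., a Barron- or Makovoz-type theorem for networks with one hidden layer and bounded output weights controlled by $\beta_n$) to get a bound of order $O(1/k)$ on $\inf_{f\in\mathscr{F}_{n,k}}\|f-m\|_{L_2(\mu)}^2$; the Lipschitz assumption ensures $m$ lies in a suitable approximation class so that this rate (rather than a slower dimension-dependent one) is available. For the estimation term, I would bound the covering numbers of $\mathscr{F}_{n,k}$ via its pseudo-dimension, which scales like the number of free parameters $k(d_m+2)+1 = O(kd_m)$; standard uniform-law-of-large-numbers arguments for bounded-output VC-type classes then yield an estimation error of order $O\!\big(\tfrac{kd_m \log n}{n}\big)$, consistent with the growth condition $\tfrac{k\beta_n^4 \log(k\beta_n^2)}{n} \to 0$ imposed by Theorem \ref{theo3}.

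Balancing the two bounds by equating $1/k$ with $kd_m\log(n)/n$ gives $k^2 \asymp n/(d_m\log n)$, hence
\[
k \;=\; O\!\left(\sqrt{\tfrac{n}{d_m\log n}}\right),
\]
which is the claimed optimal rate. I would then verify that this choice respects the three growth conditions of Theorem \ref{theo3} (so that consistency is preserved), since $k\to\infty$ and $k = o(n/\log n)$ both hold, and $\beta_n$ can be tuned compatibly.

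The main obstacle is obtaining the estimation bound with the correct $d_m\log n$ factor under the architecture of $\mathscr{F}_{n,k}$: one must carefully account for the dependence of the pseudo-dimension on $d_m$ and for the $\beta_n$ scaling of the output weights, rather than citing a generic neural-network complexity bound. Once that is in place, the balancing step is an elementary calculation and the result follows directly.
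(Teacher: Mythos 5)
Your proposal follows essentially the same route as the paper: the same decomposition into estimation and approximation error (the paper invokes Lemma 10.1 of Györfi et al.), an estimation bound of order $O\big(\tfrac{k d_m \log n}{n}\big)$ obtained from covering numbers of $\mathscr{F}_{n,k}$ (the paper uses Pollard's inequality and explicit covering-number bounds within a complexity-regularization penalty, rather than a pseudo-dimension argument, but this is only a technical variant), an approximation bound of $O(1/k)$ for the Lipschitz-smooth target (the paper cites Mhaskar's theorem where you suggest a Barron/Makovoz-type result), and the same final balancing step yielding $k = O\big(\sqrt{n/(d_m \log n)}\big)$. The argument is correct in the same sense and at the same level of rigor as the paper's own proof.
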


\begin{proof}
To prove Proposition \ref{optimalnumberann}, we use results from statistical learning theory of neural networks \citep[Chapter 12]{gyorfi2006distribution}. We use the complexity regularization principle to choose the parameter $k$ in a data-dependent manner \citep{kohler2005adaptive, hamers2003bound, kohler2006nonparametric}. Consistency results presented in Theorem \ref{theo3} state that
\[
\mathbb{E}\int_{C^{d_m}}(m_{n}(Z)-m(Z))^{2}\mu(dz) \rightarrow 0 \quad \mbox{as}\quad n
\rightarrow \infty.
\]
We can
write, using \citet[][Lemma 10.1]{gyorfi2006distribution}, that
\begin{align}
\mathbb{E}\Bigg[\int_{C^{d_m}} \big| m_n(Z)-m(Z)\big|^2 \mu(\mbox{dz})\Bigg] \leq 2\mathbb{E}\Bigg[ \sup_{f \in
\mathscr
F_{n,k}}\bigg|\frac{1}{n}\sum_{i=1}^n\big|Y_i-f(Z_i)\big|^2 
- \mathbb{E}\big|Y-f(Z)\big|^2\bigg| \Bigg] \nonumber \\
+ \mathbb{E} \Bigg[\inf_{f \in \mathscr F_{n,k}}
\int_{C^{d_m}}\big|f(Z)-m(Z)\big|^2\mu(\mbox{dz}) \Bigg],\label{imp}
\end{align}
where $\mu$ denotes the distribution of $\mathbb{Z}$. For the consistency
of the neural network model, the \textit{estimation error} (first term in the RHS of
\ref{imp}) and the \textit{approximation error} (second term in the
RHS of \ref{imp}) should tend to 0. To find the bound for $k$, we apply 
non-asymptotic uniform deviation inequalities and covering numbers
corresponding to $\mathscr F_{n,k}$. Assuming $Y$ is bounded as in
Theorem \ref{theo3}, we write (\ref{imp}) as
\begin{align}
\mathbb{E} \int_{C^{d_m}} \big| m_n(Z)-m(Z)\big|^2 \mu(\mbox{dz}) \leq 2
\min_{k \geq 1}\bigg\{pen_n(k)+\inf_{f \in \mathscr
F_{n, k}}\int_{C^{d_m}}\big|f(z)-m(z)\big|^2\mu(\mbox{dz})\bigg\}
+O\Big(\frac{1}{n}\Big).\label{roc}
\end{align}

We have assumed that for each $f \in \mathscr F_n$, $Y$ is bounded. Let $w_1^n=(w_1, w_2, \hdots, w_n)$ be
a vector of $n$ fixed points in $ \mathbb{R}^{d_m}$ and let $\mathscr H$ be a set
of functions from $ \mathbb{R}^{d_m} \to [-K,K]$. For every $\varepsilon >0$, we let
$\mathscr N(\varepsilon, \mathscr H,w_1^n)$ be the $L_1$
$\varepsilon$-covering number of $\mathscr H$ with respect to $w_1, w_2, \hdots, w_n$. $\mathscr N(\varepsilon, \mathscr H,w_1^n)$  is
defined as the smallest integer $N$ such that there exist functions
$h_1, \hdots, h_N: \mathbb{R}^{d_m}\to [-K,K]$ with the property that for every $h\in
\mathscr H$, there is a $j \in \{1, \hdots, N\}$ such that
$$\frac{1}{n}\sum_{i=1}^n\big|h(w_i)-h_j(w_i)\big|<\varepsilon.$$
Note that if $W_1^n = (W_1, W_2,\hdots, W_n)$ is a sequence of
i.i.d. random variables, then $\mathscr N(\varepsilon, \mathscr
H,W_1^n)$ is also a random variable. Now, let $W=(Z,Y)$,
$W_1=(Z_1,Y_1), \hdots, W_n=(Z_n,Y_n)$, and $C^{d_m} = [0,1]^{d_m}$, we
write
\begin{align*}
\mathscr H_n=\bigg\{h(z,y)& :=|y-f(z)\big|^2: (z,y)\in C^{d_m}\times
[-K,K]\mbox{ and }f\in \mathscr F_n\bigg\}.
\end{align*}
The functions in $\mathscr H_n$ will satisfy the following: $0\leq
h(z,y)\leq 2\beta_n^2 + 2K^2 \leq 4\beta_n^2.$ Using Pollard's
inequality \citep{gyorfi2006distribution}, we have, for arbitrary
$\varepsilon
>0$,
\begin{align}
& P\bigg\{ \sup_{f \in \mathscr
F_{n,k}}\Big|\frac{1}{n}\sum_{i=1}^n\big|Y_i-f(Z_i)\big|^2
- E\big|Y-f(Z)\big|^2\Big|>\varepsilon\bigg\}\nonumber\\
& = P\bigg\{ \sup_{h \in \mathscr H_n}\Big|\frac{1}{n}\sum_{i=1}^nh(W_i)- E(h(W))\Big|>\varepsilon\bigg\}\nonumber\\
& \leq 8 \mathbb{E} \bigg[\mathscr N\big(\frac{\varepsilon}{8}, \mathscr
H_n,W_1^n\big)\bigg]\exp\bigg({-\frac{n\varepsilon^2}{128(4\beta_n^2)^2}}\bigg).\label{zero}
\end{align}
Next, we try to bound the covering number $ \mathscr
N\big(\frac{\varepsilon}{8}, \mathscr H_n, W_1^n\big)$. Let us
consider two functions $h_{i}(z,y)= |y-f_{i}(z)|^2$ of $\mathscr
H_n$ for some $f_{i} \in \mathscr F_n$ and $i = 1,2$. We get
\begin{align*}
&\frac{1}{n}\sum_{i=1}^n\big|h_{1}(W_i)-h_{2}(W_i)\big|\\
& =\frac{1}{n}\sum_{i=1}^n\Big|\big|Y_i-f_{1}(Z_i)\big|^2-\big|Y_i-f_{2}(Z_i)\big|^2\Big|\\
& =\frac{1}{n}\sum_{i=1}^n\big|f_{1}(Z_i)-f_{2}(Z_i)\big|\times \big|f_1(Z_i)-Y_i+f_2(Z_i)-Y_i\big|\\
& \leq \frac{4\beta_n}{n}\sum_{i=1}^n\big|f_{1}(Z_i)-f_{2}(Z_i)\big|.
\end{align*}
Thus, if $\{h_1,h_2,...,h_l\}$ is an $\frac{\mathlarger \varepsilon}{8}$ packing of $\mathscr H_n$ on $W_1^n$, then $\{f_1,f_2,...,f_l\}$ is an
$\frac{\mathlarger \varepsilon}{32\beta_n}$ packing of $\mathscr F_n$.
\begin{equation}
\mbox{Thus,} \; \; \label{one} \mathscr
N\Big(\frac{\mathlarger \varepsilon}{8}, \mathscr H_n, W_1^n\Big)\leq \mathscr
N\Big(\frac{\mathlarger \varepsilon}{32\beta_n}, \mathscr F_n, Z_1^n\Big).
\end{equation}
The covering number $\mathscr N(\frac{\varepsilon}{32\beta_n},
\mathscr F_n, Z_1^n)$ can be upper bounded independently of $Z_1^n$
by extending the arguments of Theorem 16.1 of
\citet[]{gyorfi2006distribution}. We now define the following classes of functions:
$$ G_1=\{\sigma(a^{\top} z + b): a\in \mathbb{R}^{d_m}, b\in \mathbb{R}\},$$
$$ G_2=\{c\sigma(a^{\top} z + b): a\in \mathbb{R}^{d_m}, b\in \mathbb{R}, c \in [-\beta_n, \beta_n]\}.$$
For any $ \varepsilon > 0 $,
\begin{align*}
\mathscr N(\varepsilon, G_1,Z_1^n) & \leq 3
\bigg(\frac{2e}{\varepsilon} \log\frac{3e}{\varepsilon} \bigg)^{d_{m}+2}
\\ & = 3\Big(\frac{3e}{\varepsilon}\Big)^{2d_{m}+4}.
\end{align*}
Also, we get
\begin{align*}
\mathscr N(\varepsilon, G_2,Z_1^n) & \leq
\frac{4\beta_n}{\varepsilon}\mathscr
N\Big(\frac{\varepsilon}{2\beta_n}, G_1,Z_1^n\Big) \\ & \leq \Big(\frac{12 e \beta_n}{\varepsilon}\Big)^{2d_{m}+5}.
\end{align*}
We obtain the bound on the covering number of $\mathscr F_n$,
\begin{align}
\mathscr N(\varepsilon, \mathscr F_{n},Z_1^n) & \leq
\frac{2\beta_n}{\varepsilon}\mathscr
N\Big(\frac{\varepsilon}{k+1}, G_2,Z_1^n\Big)^{k} \nonumber\\
& \leq \Big( \frac{12 e
\beta_n(k+1)}{\varepsilon}\Big)^{(2d_{m}+5)k+1}.\label{eight}
\end{align}
According to (\ref{eight}), and for any $Z_1^n \in \mathbb{R}^{d_m}$, we have
\begin{align}
\mathscr N\Big(\frac{1}{n}, \mathscr F_{n,k},Z_1^n\Big) \leq
\Big(12en\beta_n(k+1)\Big)^{(2d_m + 5)k+1}. \label{final}
\end{align}
Using the complexity regularization principle we have
$$\sup_{Z_1^n}\mathscr N\bigg(\frac{1}{n}, \mathscr
F_{k,n},Z_1^n\bigg) \leq \mathscr N\bigg(\frac{1}{n}, \mathscr
F_{k,n}\bigg)$$ to be the upper bound on the covering number of
$\mathscr F_{k,n}$, and define for $w_{k} \geq 0$, $$ pen_n(k) =
\frac{\mbox{constant} \times K^2 \times log\mathscr N\big(\frac{1}{n}, \mathscr
F_{k,n}\big)+w_{k}}{n}$$ as a penalty term penalizing the
complexity of $\mathscr F_{k,n}$ \citep{kohler2005adaptive}. Thus
(\ref{final}) implies that $pen_n(k)$ is of the following form
with $w_{k} = 1$ and $\beta_n = \mbox{constant} < \infty$,
$$ pen_n(k) =
\frac{\mbox{constant} \times K^2 \times (2d_{m}+6)klog\big(12en\beta_n)+1}{n}=O\bigg(\frac{kd_{m}log(n)}{n}\bigg).$$
The approximation error $\inf_{f \in \mathscr
F_{k,n}}\int_{C^{d_m}}\big|f(z)-m(z)\big|^2\mu(\mbox{dz})$ depends on
the smoothness of the regression function. According to Theorem 3.4
of \citet{mhaskar1993approximation}, for any feedforward neural network with one hidden layer satisfying the assumptions of Proposition \ref{optimalnumberann}, we have
$$ \big|f(z)-m(z)\big| \leq \bigg(\frac{1}{\sqrt{k}}\bigg)^{\frac{\delta}{d_m}}$$ for all $z \in
[0,1]^{d_m}$. Thus, we have,
$$ \inf_{f \in \mathscr
F_{n, k}}\int_{C^{d_m}}\big|f(z)-m(z)\big|^2\mu(\mbox{dz}) = O\Big( \frac{1}{k} \Big).$$
Using (\ref{roc}), we have
\begin{align}
\mathbb{E} \int_{C^{d_m}} \big| m_n(Z)-m(Z)\big|^2 \mu(\mbox{dz}) \leq O\Bigg( \frac{kd_mlog(n)}{n} \Bigg) + O\Bigg( \frac{1}{k} \Bigg)
\end{align}
for sufficiently large $n$.\\
Now we can balance the approximation error with the bound on the
covering number to obtain the optimal choice of $k$ from which the assertion follows. 
\end{proof}
\begin{remark}
For practical purposes, we choose the number of hidden neurons in the BNT-2 model to be $k=\sqrt{\frac{n}{d_{m}log(n)}}$. 
\end{remark}

\section{Experimental evaluation \label{sec4}}
We now present applications of the two BNT models to real-life data sets, and evaluate them against their component regression models, namely a simple CART model, a simple BCART model, a one-hidden-layer ANN, and a one-hidden-layer BNN.
\subsection{Data}
We use regression data sets available on the UCI machine learning repository (\url{https://archive.ics.uci.edu/ml/datasets.html}). These data sets have a limited number of observations and high-dimensional feature spaces. As a part of the data cleaning process, we systematically eliminate all nonnumerical features and observations with missing values.  Table \ref{tab:data} summarizes the characteristics of the data sets.

\subsection{Performance metrics} \label{secperfmet}
For evaluating the BNT models, we use two absolute performance measures, viz. the mean absolute error (MAE) and the root mean squared error (RMSE), one relative measure, viz. the mean absolute percentage error (MAPE), and two goodness of fit measures, i.e., the coefficient of determination ($R^2$) and adjusted $R^2$. The metrics are defined as follows: 
\begin{enumerate}
    \item MAE = $\mathlarger{\frac{1}{n}\sum_{i=1}^{n} |Y_i- \hat Y_i|}$,
    \item MAPE = $\mathlarger{\frac{1}{n}\sum_{i=1}^{n} \Bigg|\frac{Y_i-\hat Y_i}{Y_i}\Bigg|}$,
    \item RMSE = $\mathlarger{{\sqrt{\frac{1}{n}\sum_{i=1}^{n}(Y_i-\hat Y_i)^2}}}$,
    \item $R^2$ = $\mathlarger{{1-\frac{\sum_{i=1}^{n}(Y_i-\hat Y_i)^2}{\sum_{i=1}^{n}(Y_i- \overline{Y})^2 }}}$,
    \item Adjusted $R^2$ = $\mathlarger{{1-\frac{(1-R^2)(n-1)}{(n-d-1)}}}$,
\end{enumerate}
\begin{table}[t]
\centering
\caption{Summary of the datasets used to evaluate the BNT models.\label{tab:data}}
\begin{tabular}{|c|c|c|}
\hline
\textbf{Dataset}             & \textbf{Number of observations (n)} & \textbf{Number of features (d)} \\ \hline
\multirow{2}{*}{AutoMPG}     & \multirow{2}{*}{398}                & \multirow{2}{*}{7}              \\
                             &                                     &                                 \\ \hline
\multirow{2}{*}{Housing}     & \multirow{2}{*}{506}                & \multirow{2}{*}{13}             \\
                             &                                     &                                 \\ \hline
\multirow{2}{*}{Power}       & \multirow{2}{*}{9568}               & \multirow{2}{*}{4}              \\
                             &                                     &                                 \\ \hline
\multirow{2}{*}{Crime}       & \multirow{2}{*}{1994}                & \multirow{2}{*}{101}            \\
                             &                                     &                                 \\ \hline
\multirow{2}{*}{Concrete}    & \multirow{2}{*}{1030}               & \multirow{2}{*}{8}              \\
                             &                                     &                                 \\ \hline
\end{tabular}
\end{table}
We note that lower values of MAE, MAPE, and RMSE, and higher values of $R^2$ and adjusted $R^2$ indicate better model performance.

\subsection{Implementation and results}
We shuffle the observations in each data set and split into training and test sets in the ratio 70:30. We carry out ten random train-test splits and report average results across all ten iterations. All models are fit on the training data, and evaluated on the test data. Experiments are carried out using $\textsf{R}$ (version 3.6.1). We fit a CART model using the $\textsf{rpart}$ package, with the stopping parameter `minsplit' set to 10\% of the training sample size. To fit a simple BNN, we use the $\textsf{brnn}$ package with the number of hidden layers set to one and the number of hidden neurons set to the default value (i.e., 2). The $\textsf{brnn}$ package implements a BNN with a Gaussian prior and likelihood, as discussed in Section \ref{bnnexplain}. To fit a simple, one-hidden-layer ANN, we make use of the $\textsf{neuralnet}$ package and set the number of hidden neurons to the default value (2). A Bayesian CART model is fit using the $\textsf{bartMachine}$ package \cite{bleich2016bartmachine}, with the number of trees set to one. For feature selection under BCART, we use local thresholding of the variable inclusion proportions, although empirical explorations show that results are not very sensitive to other thresholding methods. As seen in Tables \ref{tabresults1} and \ref{tabresults2}, the component models of the BNTs exhibit consistent results, and neural networks perform better than the tree-based models for a majority of the data sets. \par
We now turn to the implementation of the two BNT models. To implement BNT-1, we first record the selected features and predictions from the CART model, forming the set of features for the subsequent BNN model. Again, a CART model is trained with the stopping parameter `minsplit' set to 10\% of the training sample size. A one-hidden-layer BNN is then fit with the number of hidden neurons $k$ drawn from Geometric distributions with success probabilities $p \in \{0.3,0.6,0.9\}$. To implement BNT-2, we record important features and predictions from the BCART model and use these as inputs to the ANN model with one hidden layer. The number of neurons in the ANN is taken to be $\sqrt{\frac{n}{d_{m}log(n)}}$, which is the optimal number derived in Section \ref{secoptimalval}. Additionally, all data sets are min-max scaled to be in the $[0,1]$ range before training the neural network models. From Tables \ref{tabresults1} and \ref{tabresults2}, we observe that across all data sets, the proposed BNT models greatly improve the performance of their component models. We note that the BNT-2 model outperforms all others on most data sets. Consequently, we can expect the BNT predictions to be at least better than the individual model predictions, since cases, where further optimization is likely to have led to overfitting, are directly filtered out.

\begin{table}[h!]
\centering
\caption{Performance metrics for the evaluated models across different data sets. \label{tabresults1}}
\begin{tabular}{|c|c|L|c|c|c|c|c|}
\hline 
{\bf Data Set}                 & {\bf Model}   &        & \multicolumn{5}{|c|}{\bf Performance Metrics} \\ \hline
                         &    &Number of features used &MAE        &MAPE     &RMSE      &$R^2$   &adjusted $R^2$ \\ \hline
\multirow{9}{*}{AutoMPG} 
                         & CART             &3               &2.640      &0.120    &3.419     &0.834   &0.830          \\
                         & BCART            &3               &2.796      &0.117    &3.693     &0.806   &0.803          \\
                         & ANN              &7               &2.241      &0.0967   &3.164     &0.858   &0.850          \\
                         & BNN              &7               &2.253      &0.097    &3.123     &0.861   &0.854           \\
                         & BNT-1 ($p$=0.3)  &4               &2.111      &0.091    &3.016     &0.871   &0.867          \\
                         & BNT-1 ($p$=0.6)  &4               &2.110      &0.092    &3.013     &0.871   &0.870           \\
                         & BNT-1 ($p$=0.9)  &4               &2.119      &0.092    &3.018     &0.873   &0.870           \\
                         & BNT-2            &4               &2.081      & 0.090   &3.0333    &0.869   &0.868        \\ \hline 
\multirow{9}{*}{Housing} 
                         & CART             &3              &3.161      &0.163    &5.068     &0.696   &0.690           \\
                         & BCART            &4               &3.683      &0.194    &5.057     &0.697   &0.689           \\
                         & ANN              &13              &2.736      &0.132    &4.782     &0.729   &0.706           \\
                         & BNN              &13               &2.742      &0.132    &4.793     &0.704   &0.702           \\
                         & BNT-1 ($p$=0.3)  &4               &2.643      &0.129    &4.731     &0.735   &0.730           \\
                         & BNT-1 ($p$=0.6)  &4               &2.641      &0.128    &4.730     &0.735   &0.730           \\
                         & BNT-1 ($p$=0.9)  &4               &2.641      &0.128    &4.730     &0.735   &0.730           \\
                         & BNT-2            &5              &2.751      &0.134    &4.597     &0.750   &0.748        \\ \hline 
\multirow{9}{*}{Power}  
                         & CART             &2              &4.157      &0.009    &5.389     &0.901   &0.901           \\
                         & BCART            &2              &5.502      &0.008    &4.561     &0.929   &0.929           \\
                         & ANN              &4               &3.558      &0.008    &4.501     &0.937   &0.937           \\
                         & BNN              &4               &3.563      &0.007    &4.510     &0.940   &0.940           \\
                         & BNT-1 ($p$=0.3)  &3               &3.444      &0.008    &4.460     &0.932   &0.932           \\
                         & BNT-1 ($p$=0.6)  &3               &3.443      &0.008    &4.463     &0.932   &0.932           \\
                         & BNT-1 ($p$=0.9)  &3               &3.442      &0.008    &4.461     &0.932   &0.932           \\
                         & BNT-2            &3               &3.408      &0.007    &4.410     &0.934   &0.934   \\ \hline
\end{tabular}

\end{table}

\begin{table}[t]
\centering
\caption{Performance metrics for the evaluated models across different data sets. \label{tabresults2}}
\begin{tabular}{|c|c|L|c|c|c|c|c|}
\hline 
{\bf Data Set}                 & {\bf Model}   &        & \multicolumn{5}{|c|}{\bf Performance Metrics} \\ \hline
                         &                &Number of features used &MAE    &MAPE   &RMSE    &$R^2$   &adjusted $R^2$ \\ \hline
\multirow{9}{*}{Crime}  
                         & CART              &12                         &0.166  &0.435  &0.230   &0.399   &0.335           \\
                         & BCART             &15                          &0.186  &0.580  &0.231   &0.394   &0.250           \\
                         & ANN               &101                          &0.164  &0.442  &0.222   &0.443   &0.468           \\
                         & BNN               &101                          &0.167  &0.567  &0.290   &0.580   &0.580           \\
                         & BNT-1 ($p$=0.3)   &13                          &0.158  &0.395  &0.218   &0.463   &0.406           \\
                         & BNT-1 ($p$=0.6)   &13                          &0.154  &0.395  &0.218   &0.463   &0.406                 \\
                         & BNT-1 ($p$=0.9)   &13                          &0.158  &0.395  &0.218   &0.463   &0.406                \\
                         & BNT-2             &16                          &0.143  &0.367  &0.193   &0.578   &0.574  \\ \hline 
\multirow{9}{*}{Concrete}
                         & CART              &5                         &7.462  &0.286  &9.414   &0.694   &0.689           \\
                         & BCART             &3                          &7.909  &0.304  &10.064  &0.651   &0.649           \\
                         & ANN               &8                          &6.987  &0.235  &9.194   &0.709   &0.701          \\
                         & BNN               &8                          &6.043  &0.268  &7.676   &0.746   &0.842           \\
                         & BNT-1 ($p$=0.3)   &6                          &5.493  &0.194  &6.961   &0.833   &0.830           \\
                         & BNT-1 ($p$=0.6)   &6                          &5.492  &0.194  &6.950   &0.840   &0.830                  \\
                         & BNT-1 ($p$=0.9)   &6                          &5.493  &0.194  &6.961   &0.833   &0.830                \\
                         & BNT-2             &4                          &5.473  &0.178  &6.636   &0.879   &0.878          \\ \hline
                        
\end{tabular}
\end{table}

\section{Concluding remarks\label{sec5}}
In this work, we present two hybrid models that combine frequentist and Bayesian implementations of decision trees and neural networks. The BNT models are novel, first-of-their-kind proposals for nonparametric regression purposes. 
We find that the models perform competitively on small to medium-sized datasets compared to other state-of-the-art nonparametric models. Moreover, the BNT models have a significant advantage over purely frequentist hybridizations. A Bayesian approach to constructing a CART or an ANN model can check to overfit. A BCART model allows placing priors that control the depth of the resultant trees, and BNNs with Gaussian priors are inherently regularized. This prevents the need to tune multiple parameters via cross-validation manually. Thus, the proposed BNT models overcome the deficiencies of their component models and the drawbacks of using fully frequentist or fully Bayesian models. We also show that the BNT models are consistent, which ensures their theoretical validity. An immediate extension of this work will be to construct BNT models for classification problems. Another area of future work will be to extend the proposed approaches to survival regression frameworks. 
\section*{Data and code}
For the sake of reproducibility of this work, code for implementing the BNT models
is made available at
\url{https://github.com/gaurikamat/Bayesian_Neural_Tree}. The data for the experiments is obtained from \url{https://archive.ics.uci.edu/ml/datasets.html}.


\bibliographystyle{plainnat} 
\bibliography{bibliography.bib}

\end{document}